\newcolumntype{Y}{>{\centering\arraybackslash}X}
\providecommand{\U}[1]{\protect\rule{.1in}{.1in}}
\newtheorem{lemma}{Lemma} 
\newtheorem{prop}{Proposition}
\newtheorem{cor}{Corollary} 
\author{Xiaotian Xu\thanks{Authors are with the Department of Mechanical Engineering,
		2181 Glenn L. Martin Hall, Building 088,
		University of Maryland,
		College Park, MD 20742, USA.
		Email: xxu0116@umd.edu, yancy@umd.edu.} 
	~and~Yancy Diaz-Mercado
}%
\title{\LARGE \bf
	Multi-Agent Control Using Coverage Over Time-Varying Domains}
\date{\vspace{-5ex}}
\begin{document}
\maketitle

\begin{abstract} 
	
	Multi-agent coverage control is used as a mechanism to influence the behavior of a group of robots by introducing time-varying domain. The coverage optimization problem is modified to adopt time-varying domains, and the proposed control law possesses an exponential convergence characteristic. Cumbrous control for many robots is simplified by deploying distribution and behavior of the robot team as a whole.
	In the proposed approach, the inputs to the multi-agent system, i.e., time-varying density and time-varying domain, are agnostic to the size of the system. Analytic expressions of surface and line integrals present in the control law are obtained under uniform density. The scalability of the proposed control strategy is explained and verified via numerical simulation. Experiments on real robots are used to test the proposed control law.  
\end{abstract}%


\section{Introduction}
Coverage control of multi-agent system has been drawing attention for a long time due to its wide applications, such as surveillance or exploration of an interested region. However, most of works consider these regions are static. We investigate the case that the interested areas are dynamic in this paper and extend the potential applications of coverage control of multi-agent systems. Typically, coverage control employs a group of robots to optimally cover an interested area. A classical solution to coverage problems is proposed by involving proper partitions of the domain\cite{lloyd1982least,cortes2005spatially}. Coverage control based schemes have been previously developed for multi-agent systems, e.g., broadcast control (BC) scheme
\cite{darmaraju2018coverage} and pivot-based collective coverage algorithm
\cite{luo2018pivot}. In \cite{leeTVDC, SwarmBook} the authors introduce a  mechanism to influence the collaborative behavior between robots in coverage control by associating time-varying densities to the domain. An approach to perform coverage control on moving regions is discussed in \cite{nishigaki2019coverage}. However, the robot team is not able to maintain the centroidal Voronoi tessellation (CVT) configuration of the moving region, which will be one of the topics considered in this paper.

In this paper, the coverage control algorithm \cite{leeTVDC} is taken as a mechanism to affect the behavior of a big multi-robot team by involving time-varying domains. We define a time-varying subdomain in the workspace, and command the robots to optimally distribute in the time-varying subdomain. Under this strategy, when the subdomain moves and the shape or the scale of the subdomain changes, the robot team will act in a coordinating manner. The proposed control strategy allows the multi-objective control of multi-agent systems to be simplified into manipulating the subdomain directly, e.g., patrolling along the trajectory, and changing the size of robot group. Nevertheless, the formation of the robot team can be controlled by adopting time-varying densities within the subdomain as suggested in \cite{ SwarmBook,Timevaryingdomains}.

The outline of this study is as follows. In Section \ref{Preliminaries}, we
state the coverage problem and recall pertinent prior work. The
coverage control over time-varying domains strategy is proposed in Section
\ref{TimeVaryingDomains}. Analytic solutions are presented in Section
\ref{AnalyticalFormulas} under the uniform density. The scalability of the proposed algorithm is presented and validated in Section \ref{Scalability}. A robotic implementation
of the proposed control scheme together with results of experiments are illustrated in
Section \ref{Implementation}. Finally, conclusions are summarized
in Section \ref{Conclusion}.


\section{Preliminaries}

\label{Preliminaries} Coverage control will be used as a mechanism to influence
the behavior of the multi-robot team, to coordinate their inter-robot motion,
and manipulate their motion in the workspace. In this section, we provide some
preliminary descriptions of the coverage control problem that is addressed with
the modifications needed to account for time-varying domains.\vspace{-0.5em}

\subsection{The Coverage Problem}

Let $p_{i}\in\mathcal{D}\subseteq\mathbb{R}^{d}$ be the position of the
$i$\textsuperscript{th} robot, $i\in\{1,\ldots,n\}$, in the domain of
interest $\mathcal{D}$, i.e., the robot workspace. Further define a convex
time-varying \emph{subdomain} $\mathcal{S}(t)\subset\mathcal{D}$, such that
robot $i$ is said to lie in the subdomain at time $t$ if
$p_{i}(t)\in\mathcal{S}(t)$. Let \(\partial \mathcal{S}(t)\) denote the boundary of the subdomain at time \(t\), and let \(q(t)\) be differentiable for almost every \(q\in\partial \mathcal{S}\).

We will use the locational cost \cite{locationalCost} as a metric of the
coverage performance in the subdomain $\mathcal{S}(t)$ at time $t$:
\begin{equation} \label{eqlocationalCost}\mathcal{H}(p(t),t) = \sum_{i=1}^{n}
\int_{V_{i}(p(t),t)} \left\|  p_{i}(t)-q\right\|  ^{2} \phi(q,t) \,dq
\end{equation} where $p(t)=\left[  p_{1}^{T}(t),\ldots,p_{n}^{T}(t)\right]
^{T}$ is the configuration of the multi-robot team and
$\phi:\mathcal{S}(t)\times[0,\infty)\to(0,\infty)$ is a density function that
captures the relative importance of the points in the subdomain at time $t$,
differentiable in both arguments. The subdomain is partitioned into regions of
dominance, and these form a proper partition of the subdomain. We utilize a
Voronoi tessellation of the domain, given by \begin{align*} V_{i}(p,t) =
\left\lbrace q\in\mathcal{S}(t)~\middle\vert~\left\| p_{i}-q\right\|
\leq\left\|  p_{j}-q\right\|  ~\forall j\right\rbrace \end{align*} where for
ease of notation we have dropped the explicit time dependency on the
configuration of the multi-robot system.

\subsection{Centroidal Voronoi Tessellations}

A necessary condition for the minimization of the locational cost in
\eqref{eqlocationalCost} is known to be that the agents form a centroidal
Voronoi tessellation (CVT) of the domain \cite{locationalCost}, i.e., \[
p_{i}(t) = c_{i}(p,t)\qquad\forall i \] where we define  $c_{i}(t)\in
V_{i}(p,t)$ to be the center of mass of Voronoi cell $i$ at time $t$, given by
\begin{align} c_{i}(p,t)  &  = \frac{\int_{V_{i}(p,t)}
	q\phi(q,t)\,dq}{m_{i}(p,t)} \label{eqcenterOfMass}%
\end{align} where $m_{i}(p,t)$ is the mass of the corresponding cell, \begin{align} m_{i}(p,t)  &  = \int_{V_{i}(p,t)} \phi(q,t)\,dq.
\label{eqmass} \end{align}

\subsection{Coverage Control Law} \label{CoverageControlLaw}

In \cite{leeTVDC}, a control law was proposed which was later shown in
\cite{SwarmBook} to achieve exponential converge to a CVT in the case of
time-varying densities. This control law was called TVD-C for
\emph{time-varying densities, centralized case}, given by
\begin{equation}
\label{eqTVD-C}\dot{p} = \left(  I - \frac{\partial c}{\partial p}\right)
^{-1}\left(  \kappa(c(p,t)-p) + \frac{\partial c}{\partial t}\right)
\end{equation} 
where the tuning parameter $\kappa>0$ controls the
exponential convergence rate, and 
\begin{align*}
	c(p,t) = \left[ c_{1}(p,t)^{T},\ldots,c_{n}(p,t)^{T}\right]  ^{T}.
\end{align*}
When many agents are used, the control law 
called TVD-D\textsubscript{1}, which stands for \emph{time varying densities,
	decentralized case with 1-hop adjacency information}, bypasses difficulties with computing the matrix inverse in \eqref{eqTVD-C} by approximating it with the truncated Neumann series,
\begin{equation}
\label{eqTVD-D1}\dot{p} = \left(  I + \frac{\partial c}{\partial p}\right)
\left(  \kappa(c(p,t)-p) + \frac{\partial c}{\partial t}\right).
\end{equation} 

The matrix $\frac{\partial c}{\partial p}$ in \eqref{eqTVD-C} and \eqref{eqTVD-D1} is a block
matrix that possess the sparsity structure of the Delaunay graph associated
with the Voronoi tessellation \cite{SwarmBook}. The $ij$\textsuperscript{th}
block is given by 
\begin{equation} \label{eqdcidpj}\left[  \frac{\partial
	c}{\partial p}\right]  _{ij} \!\!\!\!= \frac{\partial c_{i}}{\partial p_{j}} =-
\frac{\int_{\partial V_{ij}%
		(p,t)}(q-c_{i})(q-p_{j})^{T}\phi(q,t)\,dq}{m_{i}(p,t)\|p_{i}-p_{j}\|}%
\end{equation} when $i\neq j$, and 
\begin{equation} \label{eqdcidpi}\left[
\frac{\partial c}{\partial p}\right]  _{ii} \!\!\!\!= \frac{\partial c_{i}}{\partial
	p_{i}}= \sum_{j\in\mathcal{N}_{V_{i}}} \frac{\int_{\partial
		V_{ij}(p,t)}(q-c_{i})(q-p_{i})^{T}\phi(q,t)\,dq}%
{m_{i}(p,t)\|p_{i}-p_{j}\|}%
\end{equation}
where $\partial V_{ij}=V_{i}\cap V_{j}$, and \(\mathcal{N}_{V_i}\) is the Delaunay graph neighbor set of agent \(i\). These are
($d$$-$$1$)-dimensional integrals (e.g., line integrals if
$\mathcal{D} \subseteq\mathbb{R}^{2}$), and are zero if $\partial
V_{ij}=\emptyset$ (i.e., two cells are not adjacent), or if the shared boundary
has zero ($d$$-$$1$)-dimensional measure (e.g., points
in 2D).

Assuming that the domain of interest is static, the partial $\frac{\partial
	c}{\partial t}$ in \eqref{eqTVD-C} is given by $\frac{\partial c}{\partial t}
= \left[  \frac{\partial c_{1}}{\partial t}^{T},\ldots,\frac{\partial c_{n}%
}{\partial t}^{T}\right]  ^{T}$, where \begin{equation}
\label{eqdcdt1}\frac{\partial c_{i}}{\partial t} = \frac{\int_{V_{i}%
		(p,t)}(q-c_{i})\frac{\partial\phi}{\partial t}\,dq}{m_{i}(p,t)}.
\end{equation}

Although the control law in \eqref{eqTVD-C} was derived with consideration to
time-varying densities over static domains, the proof of exponential convergence in
\cite{SwarmBook} is still valid when the domain of interest is
time-varying with almost everywhere differentiable boundary, as long as
$p_{i}(t_{0})\in\mathcal{S}(t_{0})~\forall i$. However, additional terms are
needed in \eqref{eqdcdt1} to capture the evolution of the subdomain. In the
following sections, we will derive the needed terms such that \eqref{eqTVD-C}
is able to provide exponential convergence to a CVT configuration, even when
the subdomain of interest is time-varying.

\section{Time-Varying Domains}

\label{TimeVaryingDomains} In order to retain the exponential convergence to a
CVT property of control law \eqref{eqTVD-C}, we will need to add additional
terms in \eqref{eqdcdt1} to capture the evolution of the time-varying domain.
For the subsequent analysis, it is assumed that $\frac{dq}{dt}$ exists and is
known for almost every $q\in\partial\mathcal{S}(t)$.

Note that the center of mass integrals in \eqref{eqcenterOfMass} are time
dependent in both the integral kernel and the domain of integration. Thus,
Leibniz integral rule is needed.

\begin{lemma} [Leibniz Integral Rule\cite{du1999centroidal}]\label{lem:Leibniz} 
	Let $\Omega(p,t)$ be a region that
	depends smoothly on $t$ and that has a well-defined boundary
	$\partial\Omega(p,t)$. If $F(p,t)=\int_{\Omega (p,t)}f(p,t,q)\,dq$ for
	differentiable $f$, then \begin{align*} \frac{\partial F}{\partial t} =
	\int_{\Omega(p,t)}\frac{\partial f}{\partial t}\,dq +
	\int_{\partial\Omega(p,t)}f(p,t,q)\frac{\partial q}{\partial t}%
	^{T}\hat{n}(q)\,dq \end{align*} where $\hat{n}(q)$ is the unit outward normal for $ q\in\partial\Omega(p,t)$.
\end{lemma}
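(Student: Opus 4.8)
The plan is to reduce the moving-domain integral to one over a fixed reference region, differentiate there where the domain no longer depends on $t$, and then transport the result back using the divergence theorem; the two terms in the claimed formula will emerge as the contribution from the changing integrand and the contribution from the moving boundary. Concretely, since $\Omega(p,t)$ depends smoothly on $t$, one can construct a one-parameter family of diffeomorphisms $\Phi_{t}\colon\Omega_{0}\to\Omega(p,t)$ from a fixed reference region $\Omega_{0}$, smooth in $t$, whose boundary trace realizes the prescribed motion, so that $\frac{\partial q}{\partial t}=v(q,t)$ on $\partial\Omega$ for the velocity field defined by $v(\Phi_{t}(x),t)=\frac{\partial}{\partial t}\Phi_{t}(x)$. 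Writing $q=\Phi_{t}(x)$ and $J(x,t)=\det D_{x}\Phi_{t}(x)$, the change of variables gives $F(p,t)=\int_{\Omega_{0}}f(p,t,\Phi_{t}(x))\,J(x,t)\,dx$, an integral over a $t$-independent domain.

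First I would differentiate this last expression under the integral sign, which is now legitimate because only the integrand depends on $t$ and both $f$ and $\Phi_{t}$ are differentiable; the product rule yields a term from $\frac{\partial f}{\partial t}$, a convective term $\nabla_{q}f\cdot v$ coming from the $t$-dependence of $\Phi_{t}$, and a term $f\,\frac{\partial J}{\partial t}$. Next I would invoke Jacobi's formula, $\frac{\partial J}{\partial t}=J\,(\nabla_{q}\cdot v)$, to rewrite the Jacobian derivative. Changing variables back to $\Omega(p,t)$ then collapses the convective and Jacobian terms into a single divergence, giving $\frac{\partial F}{\partial t}=\int_{\Omega(p,t)}\frac{\partial f}{\partial t}\,dq+\int_{\Omega(p,t)}\nabla_{q}\cdot(f\,v)\,dq$. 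Finally I would apply the divergence theorem to the second integral, converting it to the boundary integral $\int_{\partial\Omega(p,t)}f\,v^{T}\hat{n}\,dq$, which is exactly the claimed surface term once $v$ is identified with $\frac{\partial q}{\partial t}$ on the boundary.

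An equivalent, more elementary route that mirrors the cited reference is to work directly from the difference quotient $\frac{1}{\Delta t}[F(p,t+\Delta t)-F(p,t)]$: adding and subtracting $\int_{\Omega(p,t)}f(p,t+\Delta t,q)\,dq$ splits the change into an integrand part, which tends to $\int_{\Omega(p,t)}\frac{\partial f}{\partial t}\,dq$ by differentiability, and a domain part over the thin shell $\Omega(p,t+\Delta t)\,\triangle\,\Omega(p,t)$, whose signed volume element is $\frac{\partial q}{\partial t}^{T}\hat{n}\,\Delta t\,dS+o(\Delta t)$. I expect the main obstacle in either approach to be the boundary bookkeeping: in the flow-map version, constructing the smooth interior extension $v$ and justifying the divergence theorem under only almost-everywhere differentiability of $\partial\Omega$; in the difference-quotient version, proving that the swept-shell contribution converges to $\int_{\partial\Omega(p,t)}f\,\frac{\partial q}{\partial t}^{T}\hat{n}\,dS$ uniformly enough to exchange the limit with the integral. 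The interior and convective terms, together with the divergence-theorem step, are otherwise routine.
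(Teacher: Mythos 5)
The paper does not actually prove this lemma---it is quoted from \cite{du1999centroidal} as a known transport/Leibniz rule, so there is no in-paper argument to compare against. Your sketch is nevertheless a correct, standard derivation (it is essentially the proof of the Reynolds transport theorem): pull back to a fixed reference domain via a flow map $\Phi_t$, differentiate under the now-fixed integral sign, use Jacobi's formula $\partial_t J = J\,(\nabla_q\cdot v)$ to combine the convective and Jacobian terms into $\nabla_q\cdot(fv)$, and convert to a boundary integral by the divergence theorem. Two remarks. First, the obstacles you flag are real but are largely covered by the lemma's own hypotheses: since $\Omega(p,t)$ is assumed to depend smoothly on $t$ with a well-defined boundary, a smooth flow map (hence a smooth interior velocity field $v$) exists by assumption, and the divergence theorem applies on the Lipschitz---here convex polytopal---regions the paper actually uses; the almost-everywhere differentiability concern belongs to the later application in Proposition~\ref{prop:dcdt}, not to the lemma itself. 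Second, it is worth stating explicitly that only the normal trace $v^T\hat n$ on $\partial\Omega$ survives in the final formula, so the result is independent of the choice of interior extension of the boundary velocity; this is what justifies writing $\frac{\partial q}{\partial t}$ for $v$ in the statement. Your alternative difference-quotient route (splitting off the swept shell $\Omega(p,t+\Delta t)\,\triangle\,\Omega(p,t)$) is equally valid and closer in spirit to the cited reference. Either version, written out in full, would constitute a complete proof; as submitted this is a correct plan with the genuinely delicate steps correctly identified.
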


Application of Lemma \ref{lem:Leibniz} to \eqref{eqcenterOfMass} yields the
following result.

\begin{prop} [Time-Varying Domains]\label{prop:dcdt}
	
	Let $\mathcal{S}(t)$ be the convex subdomain to be covered with boundary
	$\partial\mathcal{S}(t)$. Assume that $\frac{d q}{d t}$ exists and is known
	for almost every $q\in\partial\mathcal{S}(t)$. Let $\partial\mathcal{S}%
	_{i}(p,t)=\partial\mathcal{S}(t)\cap V_{i}(p,t)$. Then the partial derivative
	$\frac{\partial c_{i}}{\partial t}$, which captures the time evolution of the
	center of mass of the $i$\textsuperscript{th} Voronoi cell due to the motion
	of the subdomain and the change in density function, is given by
	\begin{align}\label{eqdcidt2}
	\frac{\partial c_{i}}{\partial t} =
	\frac{1}{m_{i}(p,t)}\int_{V_{i}(p,t)} (q-c_{i})\frac{\partial\phi}{\partial
		t}\, dq + \frac{1}{m_{i}(p,t)} \int_{\partial\mathcal{S}_{i}(p,t)}
	(q-c_{i})\phi(q,t) \frac{d q}{d t}^T\hat{n}\,dq. 
	\end{align}
\end{prop}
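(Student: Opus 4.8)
The plan is to apply Lemma~\ref{lem:Leibniz} to the center-of-mass expression \eqref{eqcenterOfMass}, which is a \emph{ratio} of two domain-dependent integrals, $c_i = N_i/m_i$ where $N_i(p,t) = \int_{V_i} q\,\phi(q,t)\,dq$ and $m_i$ is the mass in \eqref{eqmass}. I would first differentiate the numerator and denominator separately using the quotient rule, $\frac{\partial c_i}{\partial t} = \frac{1}{m_i}\frac{\partial N_i}{\partial t} - \frac{N_i}{m_i^2}\frac{\partial m_i}{\partial t} = \frac{1}{m_i}\left(\frac{\partial N_i}{\partial t} - c_i\frac{\partial m_i}{\partial t}\right)$, using $N_i = m_i c_i$ to factor the result cleanly. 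The payoff of this regrouping is that each Leibniz boundary term will come paired so that the integrand carries the factor $(q-c_i)$ rather than $q$ and $c_i$ separately.

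Next I would invoke Lemma~\ref{lem:Leibniz} on each of $N_i$ and $m_i$ with $\Omega = V_i(p,t)$. For the numerator, the interior term is $\int_{V_i} q\,\frac{\partial\phi}{\partial t}\,dq$ and the boundary term is $\int_{\partial V_i} q\,\phi\,\frac{dq}{dt}^T\hat n\,dq$; for the mass, the interior term is $\int_{V_i}\frac{\partial\phi}{\partial t}\,dq$ and the boundary term is $\int_{\partial V_i}\phi\,\frac{dq}{dt}^T\hat n\,dq$. Substituting into the quotient expression and combining, the two interior integrals merge into $\frac{1}{m_i}\int_{V_i}(q-c_i)\frac{\partial\phi}{\partial t}\,dq$, which is exactly the first term of \eqref{eqdcidt2}, and the two boundary integrals merge into $\frac{1}{m_i}\int_{\partial V_i}(q-c_i)\phi\,\frac{dq}{dt}^T\hat n\,dq$. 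This recovers the desired structure, with the crucial observation that $c_i$ is constant with respect to the variable of integration $q$, so pulling it out of and combining it across the integrals is legitimate.

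The main obstacle is justifying that the boundary integral over the full cell boundary $\partial V_i$ reduces to an integral over only the subdomain portion $\partial\mathcal{S}_i(p,t) = \partial\mathcal{S}(t)\cap V_i$. The boundary $\partial V_i$ decomposes into the shared Voronoi edges $\partial V_{ij}$ with neighboring cells and the arcs lying on $\partial\mathcal{S}(t)$. On the interior Voronoi edges the relevant normal velocity $\frac{dq}{dt}^T\hat n$ must be argued to contribute nothing to $\frac{\partial c_i}{\partial t}$: these edges move only because the generators $p_j$ move, which is a $p$-dependence already captured by $\frac{\partial c}{\partial p}$ in \eqref{eqTVD-C}, not by the explicit time partial. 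Since $\frac{\partial c_i}{\partial t}$ is the partial holding the configuration $p$ fixed, the Voronoi-edge motion is frozen and only the prescribed subdomain-boundary velocity $\frac{dq}{dt}$ on $\partial\mathcal{S}$ survives, leaving exactly $\partial\mathcal{S}_i$ as the effective domain of integration.

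Finally I would collect the two surviving pieces to arrive at \eqref{eqdcidt2}. I expect the argument to hinge entirely on correctly interpreting the partial $\frac{\partial c_i}{\partial t}$ as taken at fixed $p$, so that the only time-dependence in the domain of integration that matters is the externally-commanded motion of $\partial\mathcal{S}(t)$; everything else is routine application of the quotient rule and Lemma~\ref{lem:Leibniz}.
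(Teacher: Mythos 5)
Your proposal is correct and follows essentially the same route as the paper's proof: a quotient/product-rule split of $c_i = N_i/m_i$, Leibniz's rule (Lemma~\ref{lem:Leibniz}) applied to both the numerator integral and the mass, and the observation that $\frac{\partial q}{\partial t}=0$ for almost every boundary point not on $\partial\mathcal{S}_i(p,t)$ (since the Voronoi edges move only through $p$, which is held fixed), so only the subdomain portion of $\partial V_i$ survives. Your third paragraph merely spells out more explicitly the justification the paper states in one line; there is no substantive difference.
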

\begin{proof} By the product rule, we find 
	\begin{align*} \frac{\partial
		c_i}{\partial t} = \frac{\frac{\partial}{\partial
			t}\left(\int_{V_i(p,t)}q\phi(q,t)\,dq\right)}{m_i(p,t)}
	-\frac{c_i(p,t)}{m_i(p,t)}\frac{\partial m_i}{\partial t}. 
	\end{align*} 
	Both terms now contain derivatives of integrals with time-varying integral
	kernels and domains. Applying Leibniz integral rule to the first term yields
	\begin{align*} \frac{\partial}{\partial
		t}\left(\int_{V_i(p,t)}q\phi(q,t)\,dq\right)= \int_{V_i(p,t)} q\frac{\partial
		\phi}{\partial t}\, dq  + \int_{\partial V_i(p,t)} q\phi(q,t) \frac{\partial
		q}{\partial t}^T\hat{n}\,dq 
	\end{align*} 
	Then the latter integral can be split
	into two terms: for \(q\in\partial\mathcal{S}_i(p,t)\), and for \(q\in\partial
	V_i(p,t)\backslash\partial\mathcal{S}_i(p,t)\). Note that only the points on
	the boundary of the subdomain change explicitly with time, i.e., since
	\(\frac{\partial q}{\partial t}=0\) for almost every point \(q\notin\partial
	\mathcal{S}_i(p,t)\) we get 
	\begin{align*} \frac{\partial}{\partial
		t}\left(\int_{V_i(p,t)}q\phi(q,t)\,dq\right)= \int_{V_i(p,t)} q\frac{\partial
		\phi}{\partial t}\, dq + \int_{\partial\mathcal{S}_i(p,t)} q\phi(q,t) \frac{d
		q}{d t}^T\hat{n}\,dq. 
	\end{align*} 
	Similarly, for \(\frac{\partial
		m_i}{\partial t}\) we find 
	\begin{align*} \frac{\partial m_i}{\partial t}=
	\int_{V_i(p,t)} \frac{\partial \phi}{\partial t}\, dq +
	\int_{\partial\mathcal{S}_i(p,t)} \phi(q,t) \frac{d q}{d t}^T\hat{n}\,dq.
	\end{align*} 
	Collecting like terms, we get
	\begin{align*} 
	\frac{\partial c_i}{\partial t} = 
	\frac{\textstyle\int_{V_i(p,t)} (q-c_i)\frac{\partial \phi}{\partial t}\,
		dq}{m_i(p,t)} + \frac{\textstyle
		\int_{\partial\mathcal{S}_i(p,t)} (q-c_i)\phi(q,t) \frac{d q}{d t}^T\hat{n}\,dq
	}{m_i(p,t)}
	\end{align*}  
	as was to be shown.\hfill
\end{proof}

\begin{cor} [Exponential Convergence]\label{cor:ExponentialConvergence}
	The control law in \eqref{eqTVD-C} with updated partial derivative as in
	Proposition \ref{prop:dcdt} yields exponential convergence to a CVT with
	exponential decay rate controlled by $\kappa>0$ over smoothly time-varying
	domains and densities, as long as \(p_i(t_0)\in\mathcal{S}(t_0)~\forall i\).
\end{cor}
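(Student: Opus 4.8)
The plan is to establish a Lyapunov-type argument showing that the closed-loop dynamics drive the configuration error to zero exponentially. First I would define the error coordinates $e(t) = c(p(t),t) - p(t)$, which measures the deviation of each agent from the centroid of its Voronoi cell; a CVT configuration corresponds precisely to $e \equiv 0$. The goal is to show $e$ satisfies a linear exponentially stable differential equation.

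The key step is to differentiate $e$ along trajectories of the closed-loop system. By the chain rule, $\dot{e} = \frac{\partial c}{\partial p}\dot{p} + \frac{\partial c}{\partial t} - \dot{p}$, where crucially the total time derivative of $c$ now decomposes into a spatial part $\frac{\partial c}{\partial p}\dot{p}$ and an explicit part $\frac{\partial c}{\partial t}$ given by Proposition \ref{prop:dcdt}. Substituting the control law \eqref{eqTVD-C} for $\dot{p}$, I would use the identity $\left(I - \frac{\partial c}{\partial p}\right)\dot{p} = \kappa(c-p) + \frac{\partial c}{\partial t}$ to rewrite $\frac{\partial c}{\partial p}\dot{p} - \dot{p} = -\kappa(c-p) - \frac{\partial c}{\partial t}$. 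This collapses the expression for $\dot{e}$ so that the $\frac{\partial c}{\partial t}$ terms cancel exactly, yielding $\dot{e} = -\kappa e$, from which $\|e(t)\| = e^{-\kappa(t-t_0)}\|e(t_0)\|$ follows immediately and establishes exponential convergence with rate $\kappa$.

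The essential observation is that the structure of the control law \eqref{eqTVD-C}, together with the corrected $\frac{\partial c}{\partial t}$ of Proposition \ref{prop:dcdt}, is designed exactly so that the explicit domain-motion and density-variation terms are feedforward-cancelled. This is where the new boundary integral over $\partial\mathcal{S}_i$ matters: the proof in \cite{SwarmBook} used only the density-variation term in \eqref{eqdcdt1}, but the decomposition of $\frac{\partial}{\partial t}\!\left(\int_{V_i} q\phi\,dq\right)$ and of $\frac{\partial m_i}{\partial t}$ now correctly accounts for the moving boundary, so the feedforward term $\frac{\partial c}{\partial t}$ in the controller matches the true explicit derivative of $c$ and the cancellation goes through unchanged.

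The main obstacle I anticipate is technical rather than conceptual: justifying that the argument in \cite{SwarmBook} remains valid under time-varying domains. Specifically, one must verify that $\left(I - \frac{\partial c}{\partial p}\right)$ remains invertible along trajectories (so the control law is well-defined), that the configuration stays inside $\mathcal{S}(t)$ given the initial condition $p_i(t_0)\in\mathcal{S}(t_0)$, and that the almost-everywhere differentiability of the boundary is enough regularity for the Leibniz rule and the differentiation of $e$ to hold. Since Proposition \ref{prop:dcdt} supplies the correct $\frac{\partial c}{\partial t}$ and the algebraic cancellation is identical to the static-domain case, I would argue that the exponential convergence proof of \cite{SwarmBook} carries over verbatim once these regularity conditions are in place.
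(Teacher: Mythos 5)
Your proposal is correct and follows essentially the same route as the paper, which simply defers to the argument in \cite{SwarmBook}: defining $e = c(p,t)-p$, differentiating along trajectories, and using the structure of \eqref{eqTVD-C} so that the feedforward term $\frac{\partial c}{\partial t}$ (now correctly supplied by Proposition \ref{prop:dcdt} for moving boundaries) cancels and leaves $\dot{e}=-\kappa e$. You also correctly flag the regularity and invertibility caveats that the paper inherits implicitly from the cited reference.
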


\begin{proof} Same as in \cite{SwarmBook}.\hfill
\end{proof} Proposition
\ref{prop:dcdt} introduces a new term in the computation of $\frac{\partial
	c_{i}}{\partial t}$ when compared to \eqref{eqdcdt1}, which allows us to
explicitly take into account the evolution of the subdomain. This additional
term can serve as an exogenous input to the multi-robot team to control their
collective position and scale, while the density function can be chosen to
provide the desired shape in multi-robot team formations, as suggested in
\cite{D3C_HSI}. These choices of inputs possess the
advantages of being flexible with respect to the size of the system, and as well as being
identity-agnostic, so that the input may be chosen without the need of
performing assignments of roles, such as is needed in leader-follower schemes.

In the next section, we explore the coverage problem over uniform densities in
convex 2-polytope subdomains, and analytical expressions are found for the
terms in \eqref{eqTVD-C} and \eqref{eqTVD-D1}.

\section{Coverage over Convex 2D-Polytopes with Uniform Density}

\label{AnalyticalFormulas} We now focus on the class of coverage problems where
the subdomain is a convex polytope in $\mathbb{R}^{2}$ with \(N\)
vertices using uniform density, i.e., with $\phi(q,t)=1$ for all
$q\in\mathcal{S}(t)$ and $t\geq 0$. As an immediate consequence of this, the first
term in expression provided in Proposition \ref{prop:dcdt} becomes zero. For
this class of problems, we will further see that we can find analytical
expressions in terms of time, neighboring agent positions and boundary conditions in that agent's Voronoi cell. The control law can be computed solely on local neighbor information and broadcasted domain information. We begin by considering the vertices of a Voronoi
cell.

\subsection{Vertices of the Voronoi Cell} Assuming that agent $i$ has $h'$
neighbors, the $i$\textsuperscript{th} Voronoi cell will be a convex
2-polytope with at least $h'$$-$$1$ vertices. To account for vertices due to the
intersection of the Voronoi cell with the the subdomain boundary, we will
assume the Voronoi cell consists of $M\geq h'$$-$$1$ vertices. We will denote the
vertex due to neighbors \(j\) and \(k\) (which can be determined by employing a Delaunay triangulation \cite{lee1980two}) as \(V_{ijk}\). The vertex location
\(V_{ijk}\) is given by the circumcenter of the triangle formed by connecting
the position of these three agents \(p_i\), \(p_j\) and \(p_k\), and can be
found analytically in terms of these
\cite{pedoe1995circles,cortes2004coverage}. We present the equation below for
the sake of completion,
\begin{align*}
V_{ijk}=\frac{1}{2}\frac{\alpha_{i}p_{i}+\alpha_{j}p_{j}+\alpha_{k}p_{k}%
}{||{p_{ij}}||^{2}||{p_{jk}}||^{2}-(p_{ij}^{T}p_{jk})^{2}}
\end{align*}
where the agents are assumed to be oriented in a
counterclockwise order, where \(p_{ab}= p_b-p_a\), and where
\begin{align*}
\alpha_{i}=\|{p_{jk}}\|^{2}p_{ij}^{T}p_{ik},
\alpha_{j}=\|{p_{ik}}\|^{2}p_{ij}^{T}p_{kj},
\alpha_{k}=\|{p_{ij}}\|^{2}p_{ik}^{T}p_{jk}.
\end{align*}

Denote \(V_{i\mathcal{S}\ell}\) to be the \(\ell\)\textsuperscript{th} vertex
of the \(i\)\textsuperscript{th} Voronoi cell that is on the subdomain
boundary, where without loss of generality these are assumed to be ordered
counterclockwise. These may be domain polytope vertices, or may be vertices
due to the intersection of the Voronoi cell with the subdomain. As the former vertices are assumed to be known, we provide an analytical expression for the latter.

Any point \(q\) on a face of the subdomain may be expressed using the
vertices that define the face via the parameterization \(q = L_{\ell}(\tau) =
\partial\mathcal{S}^{\ell} +
(\partial\mathcal{S}^{(\ell+1)}-\partial\mathcal{S}^{\ell})\tau\) for certain
\(\tau\in[0,1]\), where without loss of generality the vertices of the
subdomain \(\partial \mathcal{S}^\ell\) are assumed to be ordered
counterclockwise, and where \(\partial \mathcal{S}^{N+1} =
\partial\mathcal{S}^1\). Let any point \(q\in \partial V_{ij}\), the interior
Voronoi face due to two agents \(i\) and \(j\) that intersects polytope edge
\(L_\ell\), be given by \(q = R_{ij}(s) = \frac{1}{2}(p_i+p_j) + t_{ij} s\) for
certain \(s\in\mathbb{R}\), where \(t_{ij}=S(p_j-p_i)\) with \(S =
\left[\begin{smallmatrix} 0&-1\\1&0 \end{smallmatrix}\right]\) being a skew
symmetric rotation matrix. The sough after vertex \(V_{i\mathcal{S}j}\) is given by
\begin{align*}
V_{i\mathcal{S}j} = \partial\mathcal{S}^{\ell} + \left(\partial\mathcal{S}^{(\ell+1)}-\partial\mathcal{S}^{\ell}\right)\tau^*
\end{align*}
which exists if 
\begin{align*}
\tau^* = \frac{(\frac{1}{2}(p_i+p_j)-\partial\mathcal{S}^{\ell})}{\left(\partial\mathcal{S}^{(\ell+1)}-\partial\mathcal{S}^{\ell}\right)^T(p_j-p_i)}\in[0,1]
\end{align*}
found from solving for \(\tau\) in \(L_\ell(\tau)=R_{ij}(s)\).

The mass and center of mass can now be computed analytically 
in terms of the vertices of the Voronoi cell.

\subsection{Mass and Center of Mass of a Voronoi Cell}

\label{CentroidofVC} For convenience, denote the vertices of the \(i\)\textsuperscript{th} Voronoi cell as \(V_{ij}(p,t)\), \(j\in\{1,\ldots,M\}\), which are assumed without loss of generality to be ordered counterclockwise. Then the mass and center of mass may be computed as \cite{cortes2004coverage}
\begin{align*}
m_i(p,t) = \frac{1}{2}\sum_{j=1}^{M-1} V_{i(j+1)}^TSV_{ij}
\end{align*}
\begin{equation*}
c_i(p,t) = \frac{1}{6m_i(p,t)}\sum_{j=1}^{N-1}\left(V_{i(j+1)}+V_{ij}\right)
\left(V_{i(j+1)}^TSV_{ij}\right).
\end{equation*} 


\subsection{Partial Derivatives of a Centroid}\label{AnalyticalPartialDerivatives}

For a convex 2-polytope subdomain, every Voronoi cell is also a convex 2-polytope, and the boundary integrals in \eqref{eqdcidpj}, \eqref{eqdcidpi}, and \eqref{eqdcidt2} may be solved along each face analytically by using the parametric line integral as in the following lemma.
\begin{lemma} [Parametric Line Integral]\label{arclength} Let the line integral
	of $f(q)$ along $L$ be denoted by $\int_{L} f(q)dq$ where the point \(q=[q_1,q_2]^T\) can be parameterized such that if \(q_1 = h(\tau)\) and \(q_2 = g(\tau)\) for some \(\tau\in[t_0,t_1]\), then
	\begin{align*} 
	\int_{L} f(q)dq =
	\int_{t_{0}}^{t_{1}} f([h(\tau),g(\tau)]^T)||{q}^{{\prime}}(\tau)||dt
	\end{align*} 
	where
	$||{q}^{{\prime}}(\tau)||=\sqrt{(\frac{dh}{d\tau})^{2}+(\frac{dg}{d\tau})^{2}}$. 
\end{lemma}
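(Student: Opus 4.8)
The plan is to establish this as the standard change of variables that converts an arc-length line integral into an ordinary Riemann integral over the parameter interval. I would begin from the definition of the scalar line integral as a limit of Riemann sums: partition the curve $L$ by points $q^{(0)},\ldots,q^{(m)}$ and write $\int_L f(q)\,dq = \lim \sum_{k=1}^m f(q^{(k)})\,\Delta s_k$, where $\Delta s_k$ denotes the arc length of the $k$\textsuperscript{th} subarc and the limit is taken as the mesh of the partition tends to zero. Because $q$ is given by the parameterization $q=[h(\tau),g(\tau)]^T$ with $\tau\in[t_0,t_1]$, each partition of $L$ corresponds to a partition $t_0=\tau_0<\tau_1<\cdots<\tau_m=t_1$ via $q^{(k)}=[h(\tau_k),g(\tau_k)]^T$.

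The next step is to rewrite each arc-length increment $\Delta s_k$ in terms of the parameter increment $\Delta\tau_k=\tau_k-\tau_{k-1}$. Approximating the subarc by its chord gives $\Delta s_k \approx \|q^{(k)}-q^{(k-1)}\| = \sqrt{(h(\tau_k)-h(\tau_{k-1}))^2+(g(\tau_k)-g(\tau_{k-1}))^2}$. Applying the mean value theorem to the differentiable functions $h$ and $g$ on $[\tau_{k-1},\tau_k]$ yields $h(\tau_k)-h(\tau_{k-1})=h'(\xi_k)\,\Delta\tau_k$ and $g(\tau_k)-g(\tau_{k-1})=g'(\eta_k)\,\Delta\tau_k$ for some $\xi_k,\eta_k\in(\tau_{k-1},\tau_k)$, so that the chord length becomes $\sqrt{h'(\xi_k)^2+g'(\eta_k)^2}\,\Delta\tau_k$. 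Substituting back, the Riemann sum reads $\sum_{k=1}^m f([h(\tau_k),g(\tau_k)]^T)\sqrt{h'(\xi_k)^2+g'(\eta_k)^2}\,\Delta\tau_k$.

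The final step is to pass to the limit as the mesh shrinks and identify the result with $\int_{t_0}^{t_1} f([h(\tau),g(\tau)]^T)\,\|q'(\tau)\|\,d\tau$, where $\|q'(\tau)\|=\sqrt{(dh/d\tau)^2+(dg/d\tau)^2}$. I expect the main obstacle to be precisely this limit: the mean value theorem furnishes \emph{distinct} evaluation points $\xi_k$ and $\eta_k$ for the two components within each subinterval, so the summand is not literally of the form $f(q(\tau^*))\|q'(\tau^*)\|$ evaluated at a single sample point $\tau^*$. To close this gap I would assume $h'$ and $g'$ are continuous, so that $q'$ is continuous on the compact interval $[t_0,t_1]$ and hence uniformly continuous; this forces $|h'(\xi_k)-h'(\tau_k)|$ and $|g'(\eta_k)-g'(\tau_k)|$ to be uniformly small as the mesh tends to zero. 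The difference between the genuine Riemann sum for the continuous integrand $f([h(\tau),g(\tau)]^T)\|q'(\tau)\|$ and the sum above then vanishes in the limit, and the remaining sum converges to the stated integral by the definition of the Riemann integral. The same uniform-continuity argument also controls the error incurred by replacing each subarc length with its chord length, completing the identification.
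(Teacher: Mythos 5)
The paper does not actually prove this lemma; it is stated as the standard identity expressing a scalar line integral (an integral with respect to arc length) through a parameterization, so there is no in-paper argument to match yours against. Your first-principles proof from the Riemann-sum definition is correct and is the classical one: partition the parameter interval, replace each subarc length by its chord, apply the mean value theorem componentwise, and pass to the limit. You also correctly identify the two places where care is needed --- the mean value theorem yields distinct sample points $\xi_k$ and $\eta_k$ in the two coordinates, and the subarc length is not literally the chord length --- and both gaps are indeed closed by uniform continuity of $q'$ on the compact interval $[t_0,t_1]$. Note that this forces you to strengthen the hypothesis from ``parameterizable'' to continuously differentiable, a condition the lemma as stated omits; the strengthening is harmless in context, since every application in the paper parameterizes a straight polytope edge affinely, $q(\tau)=v_1^j(1-\tau)+v_2^j\tau$, for which $q'$ is constant and the whole limiting argument collapses to a single substitution. (Incidentally, the displayed formula in the lemma ends in $dt$ where it should read $d\tau$.)
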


Let that \(\partial\mathcal{S}_i(p,t)=\partial\mathcal{S}(t)\cap V_i(p,t)=\bigcup_{j}\partial\mathcal{S}_i^j\), where \(\partial\mathcal{S}_i^j\) is the \(j\)\textsuperscript{th} Voronoi cell face shared with the subdomain boundary. Each of the \(\partial\mathcal{S}_i^j\) is a straight line, connected by two Voronoi cell vertices \(v_{1}^j\) and \(v_{2}^j\). By linearity of the integral, the partial derivative \eqref{eqdcidt2} may be computed as
\begin{align*} 
\frac{\partial c_{i}}{\partial t} =
\sum_{j}\frac{\nu_{j}(t)}{m_{i}}%
\int_{\partial S_{i}^{j}} (q-c_{i}) \, dq 
\end{align*} 
where \(\nu_{j}(t)\) is the velocity of the points in the boundary $\partial S_{i}^{j}$, assumed to be the same for every point, projected in the unit outward direction (i.e., \(\nu_{j}(t)=\pm\frac{dq}{dt}^TS\frac{v_{2}^j-v_{1}^j}{\|v_{2}^j-v_{1}^j\|} \forall q\in \partial S_{i}^{j}\) depending on the orientation of the vertices).
\begin{lemma}[Analytic Partial Derivatives]\mbox{}

	Under uniform density and a convex 2-polytope subdomain with \(\partial\mathcal{S}_i^j=\overline{v_{1}^jv_{2}^j}\), \eqref{eqdcidt2} can be expressed as
	\begin{align*}
	\frac{\partial c_{i}}{\partial t} = \sum_j\frac{\nu_{j}(t)}{2m_{i}%
	}\left(  \left(  v_{2}^{j}-c_{i}\right)  +\left(  v_{1}^{j}-c_{i}\right)
	\right) \|v_{2}^{j}-v_{1}^{j}\|
	\end{align*}
	and for \(\partial V_{ij}=\overline{v_{1}^jv_{2}^j}\), \eqref{eqdcidpj} may be expressed as
	\begin{multline*}
	\frac{\partial c_{i}}{\partial p_{j}} =
	-\frac{\|v_{2}^{j}-v_{1}^{j}\|}%
	{m_{i}\|p_{i}-p_{j}\|}\Big[\left(  v_{1}^{j} - c_{i}\right)  \left( v_{1}^{j}
	- p_{j}\right)  ^{T} +\frac{1}{2}\left(  (v_{1}^{j} - c_{i})(v_{2}^{j} -
	v_{1}^{j})^{T} + (v_{2}^{j} -v_{1}^{j})(v_{1}^{j} - p_{j})^{T} \right) \\
	+\frac{1}{3}\left(  v_{2}^{j} -v_{1}^{j} \right)  \left(  v_{2}^{j} - v_{1}^{j}
	\right)^{T}\Big].
	\end{multline*}
\end{lemma}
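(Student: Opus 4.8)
The plan is to reduce both claimed formulas to a single elementary integral over a line segment, handled uniformly by Lemma \ref{arclength}. Since each shared face is a straight segment $\overline{v_1^j v_2^j}$, I would parameterize it by $q(\tau) = v_1^j + (v_2^j - v_1^j)\tau$ for $\tau \in [0,1]$. The crucial simplification is that along a straight face the tangent $q'(\tau) = v_2^j - v_1^j$ is constant, so $\|q'(\tau)\| = \|v_2^j - v_1^j\|$ factors out of every line integral as a common length factor. This turns each boundary integral into an ordinary integral of a low-degree polynomial in $\tau$ over $[0,1]$, which I then evaluate using $\int_0^1 1\, d\tau = 1$, $\int_0^1 \tau\, d\tau = \tfrac12$, and $\int_0^1 \tau^2\, d\tau = \tfrac13$.

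For the $\partial c_i/\partial t$ formula, I would start from the decomposition already stated just above the lemma, $\frac{\partial c_i}{\partial t} = \sum_j \frac{\nu_j(t)}{m_i}\int_{\partial S_i^j}(q - c_i)\, dq$, which is valid because the uniform-density assumption $\phi \equiv 1$ annihilates the first term of \eqref{eqdcidt2}. Substituting the parameterization, the integrand $q(\tau) - c_i = (v_1^j - c_i) + (v_2^j - v_1^j)\tau$ is affine in $\tau$, so its integral over $[0,1]$ is $(v_1^j - c_i) + \tfrac12(v_2^j - v_1^j)$. I would then rewrite this as $\tfrac12\big((v_1^j - c_i) + (v_2^j - c_i)\big)$ and multiply by the length factor $\|v_2^j - v_1^j\|$ to recover the claimed expression.

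For the $\partial c_i/\partial p_j$ formula, I would specialize \eqref{eqdcidpj} to $\phi \equiv 1$ and $\partial V_{ij} = \overline{v_1^j v_2^j}$. Writing $a = v_1^j - c_i$, $b = v_2^j - v_1^j$, and $d = v_1^j - p_j$, the parameterization gives $q - c_i = a + b\tau$ and $q - p_j = d + b\tau$, so the matrix integrand expands as $(a + b\tau)(d + b\tau)^T = a d^T + \tau(a b^T + b d^T) + \tau^2\, b b^T$. Integrating term by term with the moments above yields $a d^T + \tfrac12(a b^T + b d^T) + \tfrac13\, b b^T$; restoring $a$, $b$, $d$ and attaching the prefactor $-\|v_2^j - v_1^j\|/(m_i\|p_i - p_j\|)$ reproduces the stated block.

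The computation is essentially routine once the parameterization is in place, so there is no deep obstacle; the only real care will be in bookkeeping the outer-product expansion and verifying that the $\tfrac12$ and $\tfrac13$ coefficients attach to the correct symmetric and quadratic terms, respectively. The one conceptual point worth flagging is the constancy of $\|q'(\tau)\|$ along a straight face, which is exactly what permits the length factor $\|v_2^j - v_1^j\|$ to be extracted cleanly; this would fail for a curved boundary, and is precisely where the convex 2-polytope hypothesis enters.
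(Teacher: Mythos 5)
Your proposal is correct and follows essentially the same route as the paper's proof: the same affine parameterization $q(\tau)=v_1^j(1-\tau)+v_2^j\tau$, the same use of Lemma \ref{arclength} to pull out the constant length factor $\|v_2^j-v_1^j\|$, and the same term-by-term integration of the resulting polynomial in $\tau$. You simply carry out explicitly the expansion that the paper leaves as ``expanding and integrating the terms,'' and your moment computations check out against both stated formulas.
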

\begin{proof}
	Assume the following parameterization of the
	points $q\in\partial S_{i}^{j}=\overline{v_{1}^jv_{2}^j}$ ,
	\begin{align*} 
	q(\tau) = v_{1}^{j} (1-\tau) +
	v_{2}^{j} \tau,
	\:\:\tau\in[0,1]. 
	\end{align*} 
	It follows that
	$q^{\prime}= (v_{2}^{j}-v_{1}^{j})$ for all $\tau$, and consequently $dq =
	\|v_{2}^{j}-v_{1}^{j}\|d\tau$. Applying Lemma \ref{arclength} to the integral in \eqref{eqdcidt2}, 
	\begin{align*}
	\frac{\partial c_{i}}{\partial t}  
	&  =
	\sum_j\frac{\nu_{j}(t)}{m_{i}}\int_{0}^{1} \left(  v_{1}^{j} (1-\tau) +
	v_{2}^{j} \tau- c_{i}\right)  \|v_{2}^{j}-v_{1}^{j}\|\,d\tau.
	\end{align*}
	The result follows from expanding and integrating the terms. A similar process can be applied to \eqref{eqdcidpj} by instead using the vertices that make up the boundary for \(\partial V_{ij}\) in the parameterization.\hfill
\end{proof}

\section{Scalability of the Algorithm} \label{Scalability}
In this section, we will argue that the proposed algorithm is scalable for a large group of robots (e.g., robot swarms). As a well-known fact in the field of computational geometry, the number of edges meeting at a vertex in the Voronoi diagram is not less than three, and each edge connects two vertices; it follows Euler's relation that the average number of edges of a Voronoi cell equals six\cite{aurenhammer1991voronoi,ohya1984improvements}. This fact is the key for our algorithm's significantly reduced computational cost.
\subsection{Complexity Analysis of the Algorithm}
Here we analyze the time complexity of the proposed algorithm for a single agent $p_i$.
\begin{enumerate}
	\item There are many well-studied algorithms to construct Voronoi diagrams, such as the Incremental Insertion Method and the Divide and Conquer Method\cite{aurenhammer1991voronoi}. The quaternary incremental algorithm builds the Voronoi diagram for $n$ seeds with average complexity $O(n)$ despite its worst-case complexity $O(n^2)$, and the Divide and Conquer Method has an average and worst-case complexity $O(n\log n)$\cite{ohya1984improvements}. In our algorithm, agents only require the density and boundary information within their own
	Voronoi cells (which can be computed in a distributed way
	\cite{cortes2004coverage}) and their Delaunay neighbor positions to compute the
	quantities in \eqref{eqTVD-C}. As described in \cite{leeTVDC}, the matrix inverse
	in TVD-C can be approximated using the truncated Neumann series expansion so that an
	agent only use as much information about the multi-robot team as it has access
	to. For one agent in our algorithm, only a partial Voronoi diagram, which is generated by the agent itself, its neighbors ($|\mathcal{N}_i|=6$ on average), and its neighbors' neighbors (to enclose the Voronoi cells of $\mathcal{N}_i$, a total of 12 on average) is needed instead of the overall Voronoi diagram. Thus, the time complexity becomes $O\left(\left|\bigcup_{j\in\mathcal{N}_i}\mathcal{N}_j\cup \mathcal{N}_i\right|\right)$, which is on average constant time, by using the quaternary incremental algorithm.
	\item We can obtain necessary information for $\frac{\partial c}{\partial p}$ in the control law (i.e., masses $m_i$ and $m_j$, centroids $c_i$ and $c_j$ of Voronoi cell $V_i$ and $V_j \in \mathcal{N}_i$) in constant time thanks to analytic expressions derived in previous section. The complexity becomes $O(1)$ on average in light of the fact of average number of neighbors is independent of the size of the robot team.
	\item The complexity of computing $\frac{\partial c_i}{\partial t}$ is determined by the number of edges of its Voronoi cell (\(O(M)\)). Similarly to the previous cases, as the number of edges depends on the number of neighbors, the complexity is constant time on average.
	\item Given the above terms, the control law for agent $p_{i}$ can be calculated with time complexity $O(1)$.
\end{enumerate}

\begin{table}[tb]\centering\renewcommand{\arraystretch}{1.25}
	\caption{Number of agents (out of 100) categorized by their neighbor sets $\mathcal{N}_i$ (i.e., number of neighbors).} \label{tab:NoofNeighbors}
	\begin{tabular}{c|c|c|c|c|c|c|}
		\hline
		\multicolumn{1}{|c|}{\textbf{Size of $\mathcal{N}_i$}} & \textbf{2} & \textbf{3} & \textbf{4} & \textbf{5} & \textbf{6} & \textbf{7} \\ \hhline{=======}
		\multicolumn{1}{|c|}{\textbf{Simulation}}                           & \multicolumn{6}{c|}{\textbf{Number of Agents}}                                 \\ \hline
		\multicolumn{1}{|c|}{\textbf{Trial 1}}          & 4          & 1          & 31         & 10         & 47         & 7          \\ \hline
		\multicolumn{1}{|c|}{\textbf{Trial 2}}          & 1          & 4          & 29         & 9          & 50         & 7          \\ \hline
		\multicolumn{1}{|c|}{\textbf{Trial 3}}          & 2          & 3          & 28         & 9          & 52         & 6          \\ \hline
		\multicolumn{1}{|c|}{\textbf{Trial 4}}          & 3          & 4          & 28         & 8          & 51         & 6          \\ \hline
		\multicolumn{1}{|c|}{\textbf{Trial 5}}          & 3          & 3          & 27         & 9          & 52         & 6          \\ \hhline{=======}
		\multicolumn{1}{|c|}{\textbf{Average}}          & 2.6        & 3          & 28.6       & 9          & 50.4       & 6.4        \\ \hline
	\end{tabular}
\end{table}
\subsection{Validation of Scalability}
We show the asymptotic nature of the average number of neighbors and the efficiency of the control law for a large group of robots via numerical simulations. The simulation of the proposed control strategy was run for five trials using 100 robots with random initial configurations, and the outcomes are collected in Table \ref{tab:NoofNeighbors}. Around fifty percent of agents have 6 neighbors in the simulations. The majority of the remaining agents actually had less, with only about 7$\%$ having no more than seven neighbors.This percentage increased with the size of the swarm.

Moreover, simulations of the proposed control law with and without term $\frac{\partial c}{\partial t}$ are conducted. The simulation is shown in Fig. \ref{Simulation}i, where a team of 100 robots is performing coverage of a subdomain which is tracking a circular trajectory. The instantaneous aggregated CVT error $e_{a}(t)=\|p(t)-c(p,t)\|$ of the proposed control strategy with and without the term $\frac{\partial c}{\partial t}$ is shown in Fig. \ref{Simulation}ii. As we can tell from the results, the steady state mean values of the simulation with and without $\frac{\partial c}{\partial t}$ are 0.0562 m and 0.1248 m respectively. The aggregated CVT error 0.0562 m for 100 robots (i.e., 0.000562 m on average for each robot) suggests that the proposed control law converges to a CVT. The $\frac{\partial c}{\partial t}$ term improves the error performance by 54.97$\%$ in this scenario with a large group of agents. The error is not exactly zero because the distributed form of the control law \eqref{eqTVD-D1} was used, which does not perform feedforward on every agent in the team. Thus, the agents on the boundary can immediately follow the motion of subdomain as they have direct access to the boundary conditions (i.e., whose Voronoi cell boundaries have overlapped edges with subdomain's boundaries), but the agents on the interior must wait for the boundary conditions to propagate through feedback. 

\begin{figure}[ptb] \centering 
\begin{tabular}{>{\centering}m{0.4\linewidth} m{0.6\linewidth}<{\centering}}
	\includegraphics[width=\linewidth]{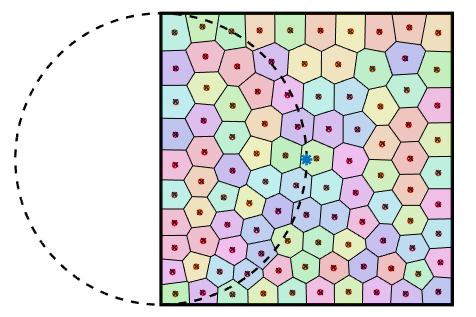} & \includegraphics[width=\linewidth]{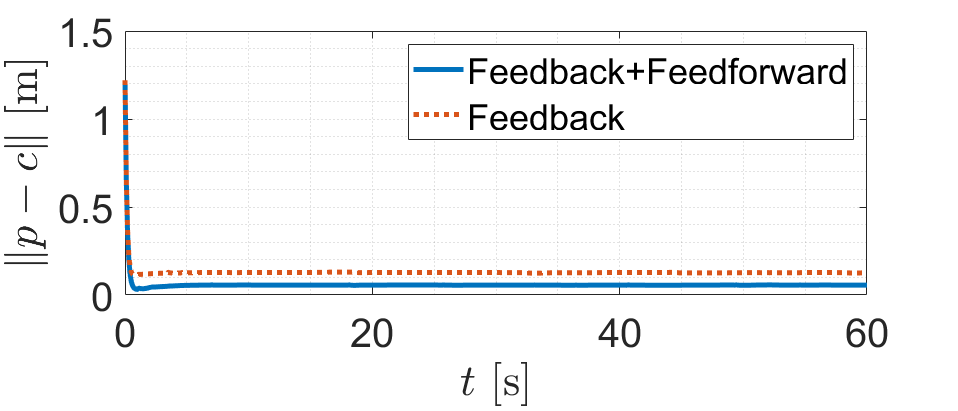}\\
	(i) & (ii)
\end{tabular}	
	\caption{Simulation of a group of 100 single-integrator robots (i); the center of subdomain (asterisk) moves along the circular trajectory (dash line) in counterclockwise fashion. Results of the simulation of the control law with and without the additional feedforward term $\frac{\partial c}{\partial t}$ (ii).\label{Simulation}}
\end{figure}

\section{Multi-Robot Implementation}
\label{Implementation}

\begin{figure*}[h] \centering
	\begin{tabular}{>{\centering}m{0.45\linewidth} m{0.45\linewidth}<{\centering}}
		\includegraphics[width=\linewidth]{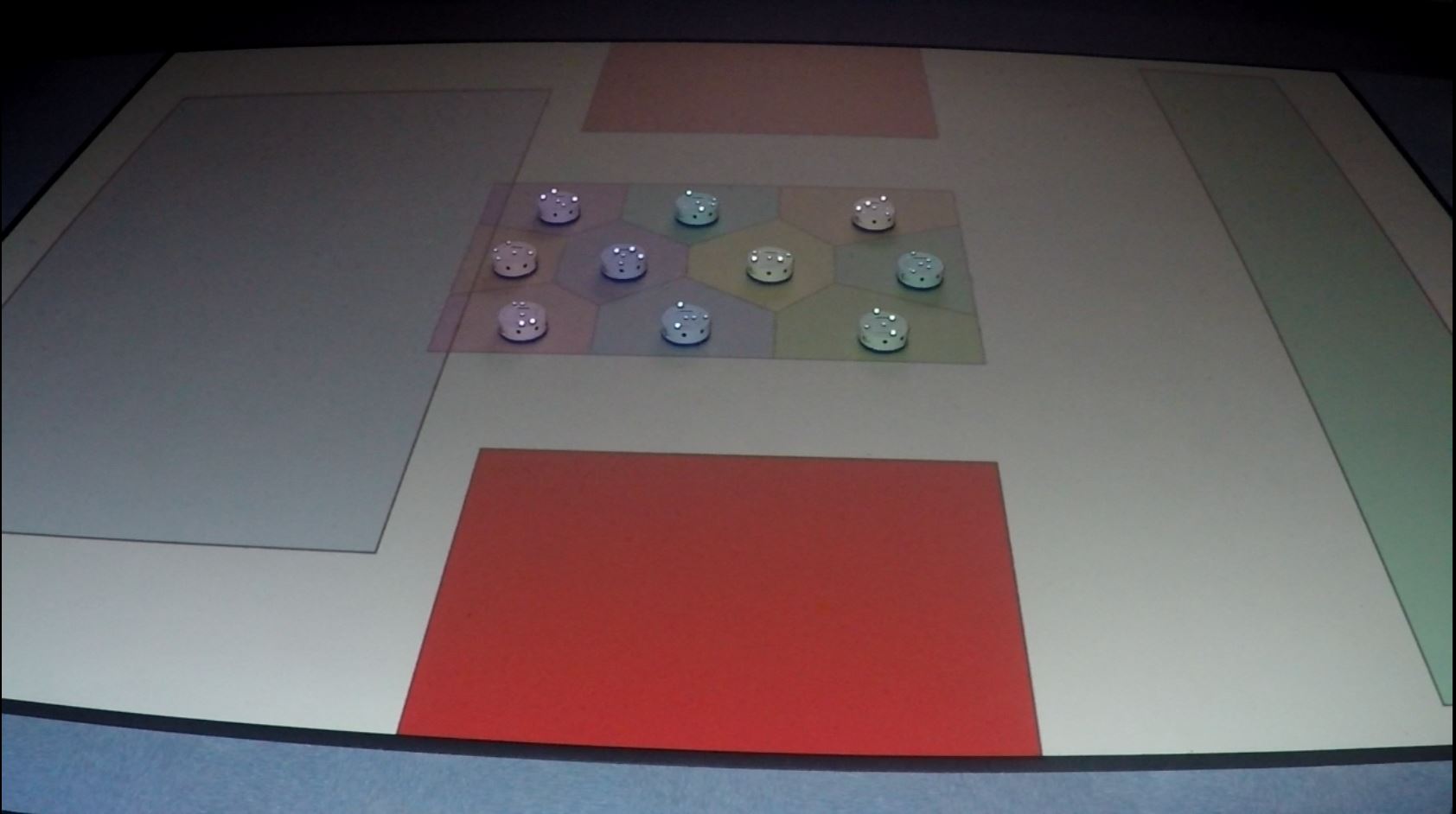} & \includegraphics[width=\linewidth]{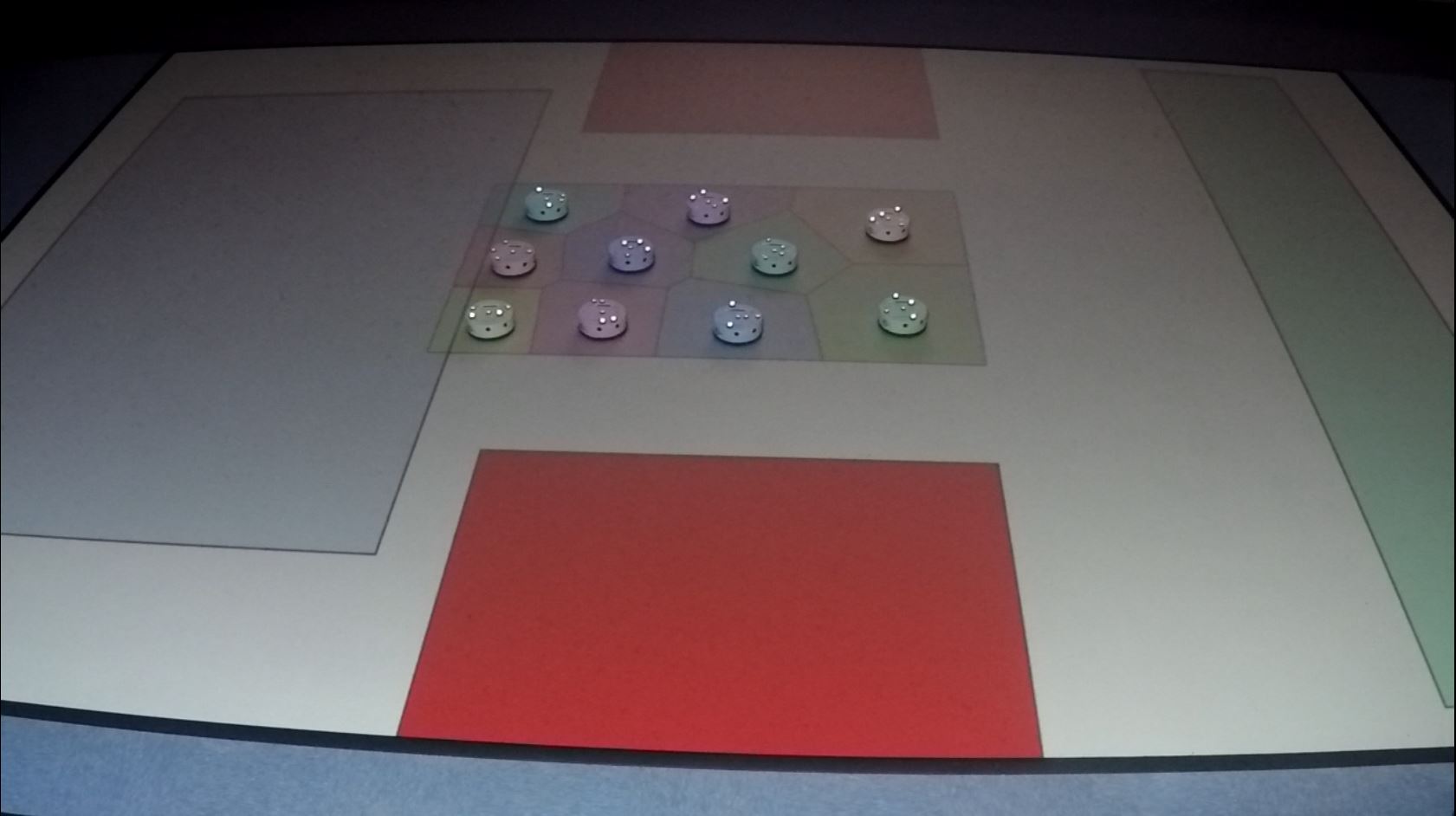}\\
		(i) & (ii)
	\end{tabular}
	\caption{Multi-robot implementation of control law with (i) and without (ii) term $\frac{\partial c}{\partial t}$. 10 robots travel from left end of workspace (blue box) towards the right end (green box) while avoiding the obstacles (red boxes). The subdomain translates and scales over time. An overhead projector is used to visualize the subdomain, goal location and obstacles in real time. \label{fig:Robots}}
\end{figure*}

\begin{figure}[ptb] \centering
	\includegraphics[width=0.6\linewidth,clip=true,trim=0 5 45 30]{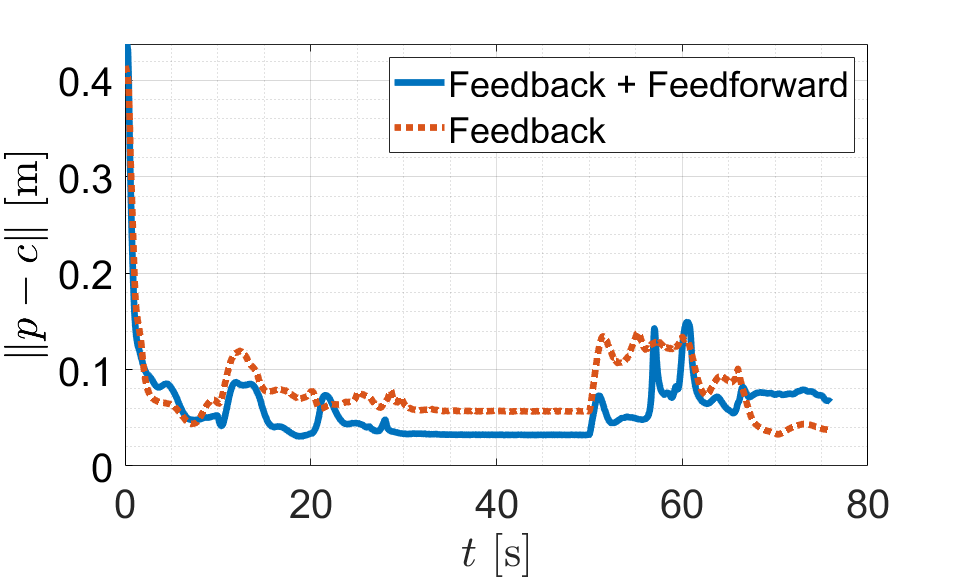}
	\caption{Results of experiments. Profiles of aggregated error from optimal configuration with and without term $\frac{\partial c}{\partial t}$  are indicated by blue solid line and red dot line}\label{ResultsofExperiment}
\end{figure}
In section \ref{AnalyticalFormulas}, the coverage control laws \eqref{eqTVD-C} and \eqref{eqTVD-D1} are modified to account for time-variations in the subdomain and the terms needed for the
equations are provided analytically in the case of uniform coverage on a convex 2-polytope domain. The results were simulated on a large team of robots in Section \ref{Scalability}.
In this section, we implement the proposed control strategy on a group of real
mobile robots. The experiments are carried out on a workstation with
\emph{Intel(R) Xeon(R) W-2125 processor}, and the algorithms are implemented in
\emph{MATLAB}. The multi-robot team consists of 10 \emph{Khepera IV}
differential-drive mobile robots. A motion capture system consisting of 8 \emph{Vicon Vantage V8} cameras are used to provide real-time position information of the multi-robot team, and an overhead projector is used to visualize a virtual environment and Voronoi tessellation. The robot workspace is defined to be a 5.182 m \(\times\) 3.658 m (17 feet $\times$ 12 feet) rectangular area.

\subsection{Experiment Description}
An experiment is carried out to validate the proposed control
strategy with a rectangular subdomain \(\partial \mathcal{S}(t)= [x_{min}(t),x_{max}(t)]\times[y_{min}(t),y_{max}(t)]\) with uniform density. 
In the experiment, we synthesize control laws for the different agents by driving the coverage domain for a group of robots to achieve a motion plan. The experiment is shown in Fig. \ref{fig:Robots}. To achieve the task, the subdomain simultaneously translates and scales to go through the narrower passage and avoid the obstacles. The velocities for the subdomain boundaries are defined as piece-wise constants to achieve the desired scaling and translation. The metric used to determine performance of the proposed control strategy on the real robots is same as that in simulation.
The control law \eqref{eqTVD-D1} and analytic expressions presented in Section \ref{AnalyticalFormulas} are used for the experiment with a control gain \(\kappa=2\).

As mentioned in Corollary \ref{cor:ExponentialConvergence}, the aggregated
error $e_{a}$ is expected to decay exponentially to zero under control law TVD-C (almost exponential convergence under control law TVD-D\textsubscript{1}), up to the error introduced by the mapping of the control law to the differential-drive motion of the robots, wheel saturations, and delays. For comparison, the experiment is executed with and without the term $\frac{\partial c}{\partial t}$, to evaluate the contribution by the inclusion of the term.

\subsection{Experimental Results} 

From Fig. \ref{fig:Robots}, we can tell the difference between performances with and without $\frac{\partial c}{\partial t}$. The robots form and maintain an almost symmetric configuration with $\frac{\partial c}{\partial t}$ (Fig. \ref{fig:Robots}i) while the robots lag behind the motion of subdomain (Fig. \ref{fig:Robots}ii) and could fail to catch up if the subdomain moves faster. Moreover, the profiles of $\|p(t)-c(p,t)\|$ are plotted in Fig. \ref{ResultsofExperiment} for the experiments of implementing the control law with the inclusion of the $\frac{\partial c}{\partial t}$ term and the absence of it, and they are shown as blue solid line and red dot line respectively. The aggregated error $e_{a}$ greatly decreases at first, as expected due to the exponential convergence. As the agents approach their CVT configuration, delays, saturations, and errors in mapping velocities to differential-drive introduce small disturbances that result in variations along a small constant value. The mean of the steady-state values of the metric are 0.0323 m (i.e., 0.0032 m per robot --- well within the footprint of the robots which are 0.14 $m$ in diameter) and 0.057 m for with and without the inclusion of \(\frac{\partial c}{\partial t}\). There is a noticeable improvement in \(e_a\) of nearly 43.3\% due to the presence of the feedforward term.

\section{Conclusion}\label{Conclusion} 
An innovative control scheme is developed of synthesizing multi-agent controllers via coverage control on time-varying domains. The approach offers a capability of controlling the behavior of multi-agent systems by simply manipulating the domain to be covered, and it has the advantage of being agnostic to the size of the system. Analytic solutions for the uniform density case is obtained, and scalability of the approach is demonstrated. The feasibility of proposed control strategy is validated in simulation for a large number of agents and experimentally on a team of real differential-drive wheeled robots.

\bibliographystyle{unsrt} 
\bibliography{ACC_arXive}
\end{document}